\documentclass[conference]{IEEEtran}

\usepackage{amsmath,amssymb,amsthm,mathtools}
\usepackage{bm,bbm}
\usepackage{graphicx}
\usepackage{subcaption}
\usepackage{booktabs}
\usepackage{xcolor}
\let\labelindent\relax
\usepackage{enumitem}
\usepackage[hidelinks]{hyperref}
\usepackage[capitalize,noabbrev]{cleveref}
\usepackage{placeins}
\usepackage{microtype}
\usepackage{algorithm}
\usepackage{algorithmic}
\usepackage{tikz}
\usetikzlibrary{positioning,arrows.meta,calc}

\newtheorem{proposition}{Proposition}

\theoremstyle{definition}

\newtheorem{remark}{Remark}

\newcommand{\R}{\mathbb{R}}
\newcommand{\C}{\mathbb{C}}
\newcommand{\E}{\mathbb{E}}
\newcommand{\herm}{\mathsf{H}}
\newcommand{\trans}{\mathsf{T}}
\newcommand{\CN}{\mathcal{CN}}
\newcommand{\Pa}{\mathrm{Pa}}
\DeclareMathOperator{\tr}{tr}
\DeclareMathOperator{\diag}{diag}

\newcommand{\MI}{I}
\newcommand{\param}{\bm{\eta}}

\title{Mutual Information Optimization via K-Recursion
and Automatic Differentiation\\
for Linear Gaussian Wireless Networks}

\author{%
\IEEEauthorblockN{Tadashi~Wadayama and Na~Siqi}\\
\IEEEauthorblockA{Department of Computer Science,
Nagoya Institute of Technology, Nagoya 466-8555, Japan\\
Email: wadayama@nitech.ac.jp}%
}

\begin{document}
\maketitle

\begin{abstract}
We present a differentiable framework for end-to-end mutual
information (MI) optimization over linear Gaussian directed acyclic
graphs (DAGs). The framework targets network-wide design under global
constraints, such as a total transmit power budget, and covers MIMO
precoding, amplify-and-forward relays, RIS-aided channels, and
branching/merging topologies within a common linear Gaussian model.
Its core ingredient is a \emph{K-recursion} that analytically
propagates all node-pair covariances along the DAG in topological
order, including non-adjacent cross-covariances that are necessary
for correctly handling branching and merging paths. The resulting
covariances yield a closed-form log-determinant expression for the
end-to-end MI as a smooth function of the controllable factors.
Complex-valued reverse-mode automatic differentiation on this
K-recursion then returns the exact Wirtinger gradient at every
controllable factor in a single backward sweep, and projected gradient
ascent (PGA) is used to maximize the MI under the global constraints. Because
no closed-form gradient expression per topology is required, the same
topology-agnostic implementation applies to any linear Gaussian DAG.
A single topology-agnostic implementation is applied to four
representative DAG classes: single-link MIMO, a diamond DAG, a two-hop
AF relay, and input-covariance shaping. The same implementation
reaches the classical water-filling optimum in the settings where it
is available and yields MI improvements in non-single-link topologies
without using topology-specific gradient formulas. A further experiment on a multi-layer Gaussian
network (11 nodes, 5 layers) illustrates applicability to nontrivial
multi-layer topologies for which no closed-form gradient is available.
\end{abstract}

\begin{IEEEkeywords}
Mutual information, linear Gaussian DAGs, MIMO precoding,
amplify-and-forward relays, automatic differentiation, Wirtinger
gradients, projected gradient ascent.
\end{IEEEkeywords}

\section{Introduction}
\label{sec:intro}

Wireless networks~\cite{tse_viswanath2005} increasingly involve
cascaded, distributed, and reconfigurable signal transformations, as
in MIMO transceivers, multi-hop amplify-and-forward (AF) relay
networks, cell-free MIMO, reconfigurable intelligent surfaces (RIS),
and cooperative sensing arrays, and are no longer assemblies of
independently designed blocks.
End-to-end gains can often be obtained by network-wide joint
optimization of the controllable parameters, such as precoders,
power allocations, RIS phase profiles, and relay processing, under
global constraints such as a total transmit power budget, rather
than optimizing each block in isolation.
The mutual information (MI) between the transmitted signal and the
final observation captures this achievable performance in a
decoder-agnostic manner, and the Palomar--Verd\'u
gradient~\cite{palomar2006} established the analytical foundation for
MI-based design in single linear vector Gaussian channels two decades
ago. Revisiting this foundation today is worthwhile: what is still lacking
is a simple topology-agnostic mechanism that (i) extends to arbitrary
network topologies built from cascaded linear Gaussian stages and
(ii) is compatible with modern automatic differentiation (AD), so
that no new gradient formula needs to be derived for each new topology.

Such systems admit a clean abstraction as a
\emph{linear Gaussian directed acyclic graph} (Gaussian-DAG):
each node is a circular complex Gaussian vector, each edge is a
linear transformation that may include controllable parameters, and
additive Gaussian noise enters at every non-root node.
Multipath superposition, branching transmissions, multi-hop
processing, and structured beamforming can be represented within this
template under linear Gaussian modeling assumptions.
Existing approaches address parts of this problem but have different
limitations:
classical closed-form solutions (waterfilling~\cite{telatar1999},
the Palomar--Verd\'u single-link MIMO gradient~\cite{palomar2006},
and AF relay designs) are derived problem by problem and do not
extend to general DAG topologies;
variational and contrastive MI estimators (MINE~\cite{mine2018},
InfoNCE~\cite{infonce2018}) target estimation rather than design;
and the score-based information-gradient framework
of~\cite{wadayama2026dag} handles general stochastic DAGs but
requires denoising score matching (DSM) and Monte Carlo estimation.
For the practically important \emph{linear Gaussian} class, the
classical closed-forms are too narrow, while this setting permits
a simpler fully analytic construction than the score-based machinery.

This paper proposes a unified differentiable framework with exact MI
evaluation and exact gradient computation for end-to-end MI
optimization over linear Gaussian-DAGs.
The MI is evaluated in closed form as a log-determinant of
covariances, and these covariances are constructed analytically by
\emph{K-recursion} propagating all node-pair covariances along the
DAG in topological order, including the cross-covariances between
non-adjacent nodes that are essential when paths branch and merge.
The resulting forward computation graph is built entirely from
differentiable matrix primitives, so that complex-valued reverse-mode
AD returns the exact Wirtinger gradient at every controllable factor
in a single backward pass.
This gradient is the input required by projected gradient ascent
(PGA), which iteratively ascends the MI while enforcing the
network-wide constraints, such as a total transmit power budget, by
projecting each iterate onto the feasible set.

\subsection*{Contributions}
\begin{itemize}[leftmargin=1.5em,nosep]
\item A \textbf{unified Gaussian-DAG model} (\cref{sec:model}) that
covers single-link MIMO, multi-hop AF relay, and branching/merging
topologies under a common edge-factorization formalism, and can also
represent RIS-aided channels at the modeling level.
\item \textbf{K-recursion} (\cref{sec:recursion}): an analytical
recursion that constructs all node-pair covariances along the DAG in
topological order; the cross-covariances between non-adjacent nodes
are shown to be indispensable in general DAGs.
\item \textbf{Exact MI gradient and PGA optimization}
(\cref{sec:mi_grad,sec:pga}): the log-det MI is differentiated by
complex reverse-mode AD and optimized under network-wide
constraints by projected gradient ascent (PGA).
\item \textbf{Numerical experiments} (\cref{sec:experiments})
illustrating gradient correctness, recovery of classical results, and
the necessity of cross-covariance handling in branching/merging DAGs.
\end{itemize}

\subsection*{Code availability}
A reference PyTorch implementation accompanies this paper at
\url{https://github.com/wadayama/gaussian-dag} (MIT license).
The repository includes the K-recursion / log-det MI / PGA library,
runnable scripts that reproduce the four panels of
\cref{fig:pga_gallery} and the multi-layer experiment of
\cref{fig:random_network} exactly, and a five-part tutorial
walkthrough.

\section{Preliminaries}
\label{sec:preliminaries}

\subsection{Notation}
\label{sec:notation}
Uppercase italic letters (e.g., $X, V_j$) denote random variables or
random vectors, while boldface letters (e.g., $\bm{A}, \bm{\Sigma}$)
denote deterministic vectors and matrices.
$\bm{A}^{\trans}$ denotes the transpose;
$\bm{A}^{\herm}$ denotes the Hermitian (conjugate) transpose;
$\bm{A}^{*}$ denotes the entry-wise complex conjugate;
$\bm{\Sigma}\succ\bm{0}$ means Hermitian positive definite (HPD);
$\bm{\Sigma}\succeq\bm{0}$ means Hermitian positive semidefinite (PSD);
$\bm{I}_d$ is the $d\times d$ identity matrix;
$\tr(\cdot)$ and $\det(\cdot)$ are trace and determinant;
$\E[\cdot]$ is expectation;
$\Re(\cdot)$ takes the real part;
$\|\cdot\|_F$ is the Frobenius norm.
For a directed acyclic graph $\mathcal{G}=(\mathcal{V},\mathcal{E})$
on a node set $\mathcal{V}=\{V_1,\dots,V_M\}$,
$\Pa(j)\subset\{1,\dots,M\}$ denotes the parent \emph{index} set of
node $V_j$, i.e.,
$\Pa(j)=\{i:(V_i\!\to\!V_j)\in\mathcal{E}\}$.

\subsection{Circular Complex Gaussian and Log-Det MI}
\label{sec:cgauss}
A random vector $Y\in\C^d$ is \emph{circular complex Gaussian},
written $Y\sim\CN(\bm{0},\bm{\Sigma})$ with covariance
$\bm{\Sigma}\succ\bm{0}$, if its density is
$p_Y(\bm{y})=\det(\pi\bm{\Sigma})^{-1}\exp\bigl(-\bm{y}^{\herm}\bm{\Sigma}^{-1}\bm{y}\bigr)$.
For jointly circular complex Gaussian $(X,Y)$ with
$\bm{\Sigma}_X,\bm{\Sigma}_Y,\bm{\Sigma}_{Y|X}\succ\bm{0}$, the mutual
information~\cite{coverthomas2006} admits the log-determinant form
\begin{equation}
\MI(X;Y)=\log\det\bm{\Sigma}_Y-\log\det\bm{\Sigma}_{Y|X},
\label{eq:logdet_mi}
\end{equation}
where
\begin{equation}
\bm{\Sigma}_{Y|X}=\bm{\Sigma}_Y-\bm{\Sigma}_{YX}\bm{\Sigma}_X^{-1}\bm{\Sigma}_{XY}
\label{eq:schur}
\end{equation}
is the Schur-complement conditional covariance (the complex case
lacks the $1/2$ coefficient of the real counterpart). This
log-det form is at the heart of the present paper: once the
relevant covariances are constructed \emph{as differentiable
functions of design parameters}, the MI itself becomes a
differentiable scalar amenable to gradient-based optimization.

\subsection{Wirtinger Calculus}
\label{sec:wirtinger}
For a real-valued function $f(\bm{\Theta})\in\R$ of a complex matrix
variable $\bm{\Theta}\in\C^{p\times q}$, the Wirtinger partial
derivatives~\cite{schreier_scharf2010} satisfy
$\partial f/\partial\bm{\Theta}=(\partial f/\partial\bm{\Theta}^{*})^{*}$,
and the steepest-ascent direction in the standard real-Euclidean
metric is the conjugate-side derivative
\begin{equation}
\nabla_{\bm{\Theta}^{*}}f\triangleq\left(\frac{\partial f}{\partial\bm{\Theta}^{*}}\right)^{\!\trans}.
\label{eq:wirtinger_grad}
\end{equation}
Throughout the paper ``gradient'' and ``$\nabla_{\bm{\Theta}^{*}}$''
refer to this textbook conjugate-side derivative, i.e., the
real-Euclidean ascent direction represented in complex coordinates;
this is the quantity obtained exactly by reverse-mode AD on the
K-recursion (\cref{sec:mi_grad}).\footnote{%
Under the Wirtinger convention adopted here,
PyTorch~\cite{paszke2019pytorch} populates each complex leaf's
\texttt{.grad} attribute with
$2\,(\partial\mathcal{L}/\partial\bm{\Theta}^{*})^{\trans}$; the
factor of two is absorbed into the step size.}

\subsection{Related Work}
\label{sec:related}
MI-based system design for vector Gaussian channels traces back to
Palomar--Verd\'u~\cite{palomar2006}, whose closed-form precoder
gradient $\nabla_{\bm{F}}\MI$ is the conceptual origin of this work.
The present work differs in that the covariance recursion itself is
used as the differentiable computational graph, so a single
reverse-mode AD sweep replaces per-topology derivations.
Water-filling~\cite{telatar1999} and 2-hop AF relay solutions are
similarly problem-specific and do not extend automatically. A separate
line of work estimates MI from samples via variational bounds
(MINE \cite{mine2018}, InfoNCE \cite{infonce2018}) or
nonparametric estimators; these introduce bias and variance that
are difficult to control when used as a design tool, and they do
not exploit the analytic tractability available in the linear
Gaussian regime. The recent score-based information-gradient
framework~\cite{wadayama2026dag} provides an exact gradient identity
for \emph{general} stochastic DAGs, implemented through denoising
score matching (DSM) and Monte Carlo estimation; the present paper
can be viewed as its analytical specialization for the linear Gaussian
case, where the score functions are analytically known and the entire
MI gradient reduces to the closed-form construction developed here,
with no DSM training, score network, or Monte Carlo sampling
required.

\section{Linear Gaussian-DAG Model}
\label{sec:model}

This section formalizes the linear Gaussian DAG model, with the
diamond of \cref{fig:dag_intro} as a running example: we establish
notation for nodes and edges
(\cref{sec:nodes_edges}), factorize each edge matrix into
controllable and constant pieces (\cref{sec:edge_factor}), and
introduce the parameter and feasible sets that the optimization
will act on (\cref{sec:param_feasible}).

\begin{figure}[t]
\centering
\includegraphics[width=0.55\linewidth]{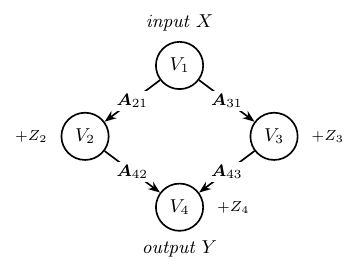}
\caption{An example of a linear Gaussian DAG (diamond network) with
input $X\!=\!V_1$, output $Y\!=\!V_4$, and edge transforms
$\bm{A}_{ji}$ (panel (b) of \cref{fig:topology}).}
\label{fig:dag_intro}
\end{figure}

\subsection{Nodes, Edges, and Input/Output}
\label{sec:nodes_edges}
Let $\mathcal{G}=(\mathcal{V},\mathcal{E})$ be a directed acyclic
graph with topologically ordered nodes
$\mathcal{V}=\{V_1,\dots,V_M\}$, where each node is a complex random
vector $V_j\in\C^{d_j}$. Here \emph{topologically ordered} means
that the index labelling satisfies $i<j$ for every edge
$(i\!\to\!j)\in\mathcal{E}$, equivalently
$\Pa(j)\subset\{1,\dots,j-1\}$ for all $j\geq 2$; such a labelling
exists because $\mathcal{G}$ is acyclic, and the K-recursion of
\cref{sec:recursion} processes the nodes along this order.
For exposition, we treat the input as the unique root
$X\triangleq V_1$ and the output as the terminal node
$Y\triangleq V_M$; multi-input or multi-output models
(broadcast/multiple-access/interference channels) are recovered by
stacking the appropriate root or sink nodes, and we omit this
straightforward extension here.
The input distribution is $X\sim\CN(\bm{0},\bm{\Sigma}_X)$ with
$\bm{\Sigma}_X\succ\bm{0}$, the HPD assumption ensuring that the
conditional log-determinant in the log-det MI of \cref{sec:cgauss}
is well-defined.
For each non-root node $j\in\{2,\dots,M\}$,
\begin{equation}
V_j=\sum_{i\in\Pa(j)}\bm{A}_{ji}V_i+Z_j,\quad
Z_j\sim\CN(\bm{0},\bm{\Sigma}_j),
\label{eq:node_eq}
\end{equation}
where $\bm{A}_{ji}\in\C^{d_j\times d_i}$ is the linear edge transform
from parent $i$ to node $j$, $\bm{\Sigma}_j\succeq\bm{0}$ is a fixed
PSD noise covariance, and the noise vectors
$\{Z_j\}_{j=2}^{M}$ are mutually independent and independent of $X$.
We adopt the standard zero-mean convention as is typical in
information-theoretic wireless modeling. The PSD assumption
$\bm{\Sigma}_j\succeq\bm{0}$ allows noiseless or rank-deficient
components at intermediate nodes; positive definiteness of the final
covariances $\bm{\Sigma}_Y, \bm{\Sigma}_{Y|X}$ is imposed only when
the log-det MI of \cref{sec:cgauss} is evaluated
(\cref{prop:MI_from_K}).
For the diamond example of \cref{fig:dag_intro}, $M = 4$ and
\eqref{eq:node_eq} specializes to
$V_2 = \bm{A}_{21} V_1 + Z_2$,
$V_3 = \bm{A}_{31} V_1 + Z_3$, and
$V_4 = \bm{A}_{42} V_2 + \bm{A}_{43} V_3 + Z_4$,
illustrating the branching at $V_1$ and the merging at $V_4$.

\subsection{Edge Factorization and Controllable Factors}
\label{sec:edge_factor}

\begin{figure*}[!t]
\centering
\includegraphics[width=0.85\textwidth]{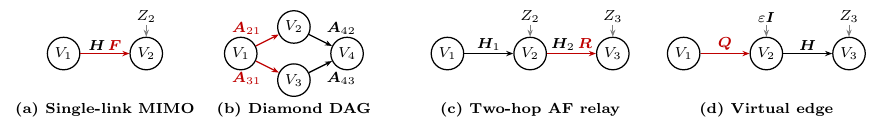}
\caption{Four representative linear Gaussian DAGs covered by this
framework, revisited in the numerical gallery of
\cref{sec:experiments}: (a) single-link MIMO, (b) diamond, (c) two-hop
AF relay, (d) virtual-edge input shaping. Edges containing a
controllable factor are drawn in
\textcolor{red!75!black}{red}, with the controllable factor itself
typeset in \textcolor{red!75!black}{red} and fixed channel factors
left in black; purely fixed edges are entirely black; dashed gray
arrows are additive Gaussian noise injections.}
\label{fig:topology}
\end{figure*}

Each edge matrix is decomposed into a product of $K_{ji}\geq 1$ factors,
\begin{equation}
\bm{A}_{ji}=\bm{A}_{ji}^{(1)}\bm{A}_{ji}^{(2)}\cdots\bm{A}_{ji}^{(K_{ji})},
\label{eq:edge_factor}
\end{equation}
read from output side ($\ell=1$) to input side ($\ell=K_{ji}$).
Each factor $\bm{A}_{ji}^{(\ell)}\in\C^{d_{ji}^{(\ell-1)}\times d_{ji}^{(\ell)}}$
has inner dimensions $\{d_{ji}^{(\ell)}\}_{\ell=0}^{K_{ji}}$ with
endpoints $d_{ji}^{(0)}=d_j$ and $d_{ji}^{(K_{ji})}=d_i$, so the
product in \eqref{eq:edge_factor} maps $\C^{d_i}$ to $\C^{d_j}$ as
required.
Each $\bm{A}_{ji}^{(\ell)}$ is labeled either \emph{controllable},
subject to design (e.g.,
precoders, power allocations, RIS phase profiles, relay gains), or
\emph{constant}, a fixed physical quantity (e.g., the channel
realization); we collect the index set of controllable factors as
\begin{equation}
\mathcal{C}\triangleq
\bigl\{(j,i,\ell) : \bm{A}_{ji}^{(\ell)}\ \text{is controllable}\bigr\}.
\label{eq:C_def}
\end{equation}
This factorization is flexible enough to encompass several canonical
wireless architectures~\cite{elgamal_kim2011} uniformly, as
illustrated in \cref{fig:topology}:
\begin{itemize}[leftmargin=1.5em,nosep]
\item\emph{Single-link MIMO} (\cref{fig:topology}(a);
$V_1\!\to\!V_2$, $V_1=X$, $V_2=Y$):
$\bm{A}_{21}=\bm{H}\bm{F}$, with constant channel $\bm{H}$ and
controllable precoder $\bm{F}$~\cite{palomar2006} (an additional
controllable combiner can be absorbed as a third factor in
\eqref{eq:edge_factor}).
\item\emph{Diamond / branching DAG}
(\cref{fig:topology}(b), detailed in \cref{fig:dag_intro}):
multiple branching paths from $X$ that merge at $Y$, each carrying
its own constant channel and controllable processing.
\item\emph{Multi-hop AF relay} (\cref{fig:topology}(c);
$V_1\!\to\!V_2\!\to\!\cdots\!\to\!V_M$):
the source-to-relay hop carries $\bm{A}_{21}=\bm{H}_1$, and each
subsequent relay-processing hop can be written as
$\bm{A}_{j+1,j}=\bm{H}_j\bm{R}_j$ with constant channel $\bm{H}_j$
and controllable relay gain $\bm{R}_j$.
\item\emph{Virtual-edge input shaping} (\cref{fig:topology}(d)):
adding a virtual source node and an auxiliary controllable edge
recasts input-covariance design ($\bm{\Sigma}_X=\bm{Q}\bm{Q}^{\herm}$
with controllable $\bm{Q}$) as ordinary edge-factor optimization, so
that input-covariance optimization is naturally subsumed by the same
K-recursion + PGA pipeline (\cref{rem:input_cov}).
\end{itemize}

\subsection{Parameter Set and Feasible Set}
\label{sec:param_feasible}
The optimization variable is the collection of controllable factors,
\begin{equation}
\param\triangleq
\bigl\{\bm{A}_{ji}^{(\ell)} : (j,i,\ell)\in\mathcal{C}\bigr\}.
\label{eq:eta_def}
\end{equation}
By default, each $\bm{A}_{ji}^{(\ell)}$ is treated as a free complex
matrix in autograd, and structural and power constraints are
expressed by a feasible set
$\mathcal{S}\subset\prod_{(j,i,\ell)\in\mathcal{C}}
\C^{d_{ji}^{(\ell-1)}\times d_{ji}^{(\ell)}}$
to which the iterates are projected (see \cref{sec:pga}).
Typical examples relevant in wireless design include:
\begin{itemize}[leftmargin=1.5em,nosep]
\item\emph{Total/per-factor Frobenius-norm budgets:}
\[
\textstyle
\sum_{(j,i,\ell)}\|\bm{A}_{ji}^{(\ell)}\|_F^2\le P
\quad\text{or}\quad
\|\bm{A}_{ji}^{(\ell)}\|_F^2\le P_{ji}^{(\ell)}.
\]
These include standard transmit-power constraints for source
precoders and serve as simple design-energy constraints for generic
controllable factors.
\item\emph{Diagonal/per-stream allocation},
$\bm{A}_{ji}^{(\ell)}=\diag(s_1,\dots,s_d)$ with $\bm{s}\in\C^d$.
\item\emph{Unit-modulus diagonal control},
$\bm{A}_{ji}^{(\ell)}=\diag(\theta_1,\dots,\theta_d)$ with
$|\theta_m|=1$, as in idealized RIS phase profiles.
\item\emph{Unit-gain scalar control},
$\bm{A}_{ji}^{(\ell)}=\alpha\bm{I}_d$ with $\alpha\in\C$.
\end{itemize}
Diagonal, scalar-gain, and unit-modulus structures are realized by
directly parameterizing the underlying scalars/vectors as autograd
leaves: the structural constraint is enforced by parameterization,
and any additional norm or modulus constraint is then projected in
this lower-dimensional parameter space, which keeps the gradient
computation transparent without changing the abstract framework.

The same parameterization mechanism also accommodates
\emph{parameter sharing}: multiple positions in $\bm{\eta}$ can be
tied to a single underlying autograd leaf, identifying controllable
factors that share a physical origin. A canonical example is a relay
node $i$ whose physical processing is applied \emph{once} and then
broadcast over all outgoing edges; this is expressed by writing
$\bm{A}_{ji}=\bm{H}_{ji}\bm{F}_i$ for every $j$ with $i\in\Pa(j)$,
so that the same controllable factor $\bm{F}_i$ appears in
$|\{j:i\in\Pa(j)\}|$ edge factorizations simultaneously. Reverse-mode
AD on the K-recursion handles such sharing transparently: gradient
contributions through every edge using $\bm{F}_i$ are accumulated by
the chain rule into a single $\nabla_{\bm{F}_i^{*}}\MI$ in one
backward sweep. The multi-layer experiment of
\cref{sec:exp_random_network} is an instance of this design.

In the experiments of \cref{sec:experiments}, we focus on
Frobenius-norm budgets on the controllable factors. Physical
node-transmit-power constraints, which may depend on the propagated
node covariances (e.g., the AF-relay transmit power
$\E\|\bm{R}V_j\|^2$), can also be incorporated through the same
K-recursion but may require different projection or penalty steps;
we leave such variants to future work.

\section{K-Recursion}
\label{sec:recursion}

This section presents the core analytic ingredient of the proposed
framework: a topological-order recursion that constructs all
\emph{node-pair covariances} of a linear Gaussian DAG as
differentiable functions of the controllable factors.
A by-product of this recursion is the closed-form expression of
$\bm{\Sigma}_Y$ and $\bm{\Sigma}_{Y|X}$ used in the log-det MI of
\cref{sec:cgauss}.

\subsection{Node-Pair Covariances and Storage Convention}
\label{sec:K_def}
For arbitrary node indices $(j,k)$ (not necessarily adjacent in the
DAG), define the node-pair covariance
\begin{equation}
\bm{K}_{jk}\triangleq\E\bigl[V_j V_k^{\herm}\bigr]\in\C^{d_j\times d_k}.
\label{eq:K_def}
\end{equation}
Hermitian symmetry $\bm{K}_{kj}=\bm{K}_{jk}^{\herm}$ implies that only
the canonical half $\{\bm{K}_{jk}: j\geq k\}$ needs to be stored.
Wherever an index reversal occurs in subsequent formulas, we adopt
the convention
\begin{equation}
\bm{K}_{ab}=\bm{K}_{ba}^{\herm}\quad\text{for}\ a<b
\label{eq:K_flip}
\end{equation}
to refer to the canonical block (\emph{Hermitian flip}).
This halves both the storage and the explicit computation.

\subsection{Recursion Formula}
\label{sec:K_recursion_prop}
The following proposition is the analytic core of this paper:
all node-pair covariances $\bm{K}_{jk}$ are obtained from
$\bm{\Sigma}_X$ and $\{\bm{\Sigma}_j\}$ by a finite sequence of matrix
products, sums, and Hermitian transposes, ordered along the DAG.

\begin{proposition}[K-recursion]
\label{prop:K_recursion}
Under the model of \cref{sec:nodes_edges} and the node equation
\eqref{eq:node_eq},
suppose that the nodes are processed in topological order and that
only the canonical covariance blocks $\bm{K}_{jk}$ with $j\geq k$ are
stored.
Whenever a non-canonical block $\bm{K}_{ab}$ with $a<b$ appears on
the right-hand side below, it is interpreted by the
\emph{Hermitian flip} of \eqref{eq:K_flip},
\begin{equation*}
\bm{K}_{ab}:=\bm{K}_{ba}^{\herm}\quad(a<b).
\end{equation*}
Then for all $j\geq k\geq 1$,
\begin{equation}
\bm{K}_{jk}=
\begin{cases}
\bm{\Sigma}_X & j=k=1,\\[2pt]
\displaystyle\sum_{i\in\Pa(j)}\!\!\bm{A}_{ji}\bm{K}_{ik}
  & j\geq 2,\,k<j,\\[6pt]
\displaystyle\sum_{i,i'\in\Pa(j)}\!\!\bm{A}_{ji}\bm{K}_{ii'}\bm{A}_{ji'}^{\herm}
  +\bm{\Sigma}_j & j\geq 2,\,k=j.
\end{cases}
\label{eq:K_recursion}
\end{equation}
\end{proposition}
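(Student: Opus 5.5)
The plan is to derive each case of \eqref{eq:K_recursion} directly from the node equation \eqref{eq:node_eq} by linearity of expectation. To do this we first need one structural fact: for every $j\geq 2$ and every $k<j$, the noise $Z_j$ is independent of $V_k$. I would prove this by strong induction along the topological order. Unrolling \eqref{eq:node_eq}, each $V_k$ is a finite linear combination of $X$ and $\{Z_m\}_{m\le k}$. Indeed, $V_k$ depends only on its parents (all with index $<k$) and on $Z_k$, and by induction each parent is a function of $X$ and $\{Z_m\}_{m<k}$. Since $Z_j$ is independent of $X$ and of all $Z_m$ with $m\neq j$, it follows that $Z_j$ is independent of $V_k$ for every $k<j$. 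With the zero-mean convention this gives $\E[Z_jV_k^{\herm}]=\bm{0}$ and $\E[V_kZ_j^{\herm}]=\bm{0}$.

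The three cases then follow in turn.

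\textbf{Case $j=k=1$.} This is simply the definition $\bm{K}_{11}=\E[XX^{\herm}]=\bm{\Sigma}_X$.

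\textbf{Case $j\geq2$, $k<j$.} Substitute \eqref{eq:node_eq} for $V_j$ to get
\[
\bm{K}_{jk}=\sum_{i\in\Pa(j)}\bm{A}_{ji}\E[V_iV_k^{\herm}]+\E[Z_jV_k^{\herm}].
\]
The last term vanishes by the independence fact, leaving $\sum_i\bm{A}_{ji}\bm{K}_{ik}$.

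\textbf{Case $j=k\ge2$.} Substitute \eqref{eq:node_eq} on both sides of $\E[V_jV_j^{\herm}]$ and expand. The parent–parent terms give $\sum_{i,i'}\bm{A}_{ji}\bm{K}_{ii'}\bm{A}_{ji'}^{\herm}$. The two cross terms $\E[V_iZ_j^{\herm}]$ and its Hermitian vanish, since every $i\in\Pa(j)$ satisfies $i<j$. The noise term is $\E[Z_jZ_j^{\herm}]=\bm{\Sigma}_j$.

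Next comes the bookkeeping check that the recursion is well-defined under the storage convention. This is the step I expect to need the most care, though it is not deep. On the right-hand side, the blocks used are $\bm{K}_{ik}$ with $i<j$, $k<j$, and $\bm{K}_{ii'}$ with $i,i'<j$. All of these involve only indices strictly less than $j$, except that in the case $k<j$ the index $k$ is also $<j$. So if the nodes are processed row by row in topological order (all $\bm{K}_{jk}$ for $k\le j$ computed at step $j$), every required block has already been computed. This holds either as a canonical block or, when $i<k$ or $i<i'$, as a non-canonical block recovered by the Hermitian flip \eqref{eq:K_flip}. The flip is valid because $\bm{K}_{ab}=\E[V_aV_b^{\herm}]=(\E[V_bV_a^{\herm}])^{\herm}=\bm{K}_{ba}^{\herm}$ holds identically.

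Within row $j$, the off-diagonal blocks $k<j$ never reference row $j$ itself, and the diagonal block $k=j$ uses only rows $<j$. So the order of evaluation within row $j$ is immaterial. A formal induction on $j$ with the hypothesis ``all $\bm{K}_{ab}$ with $\max(a,b)<j$ are correct'' then closes the argument.

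Note that no positivity assumption is needed on $\bm{\Sigma}_j$: only second moments and independence are used, so the PSD case is covered.
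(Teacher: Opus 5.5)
Your proposal is correct and follows essentially the same route as the paper's proof. Like the paper, you first prove an independence lemma by induction along the topological order: each $V_k$ is a linear function of $X$ and $\{Z_l\}_{l\le k}$, so $\E[Z_jV_k^{\herm}]=\bm{0}$ for $k<j$. You then substitute \eqref{eq:node_eq} directly for the cross- and self-blocks, and close with a termination argument showing that every right-hand-side block is already available, either in the canonical store or through the Hermitian flip. Your explicit justification of the flip identity and your note that the order of evaluation within row $j$ does not matter are small clarifications the paper leaves implicit, not a different approach.
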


\begin{proof}
The proof proceeds by strong induction on the topological order
$j=1,2,\dots,M$, with the inductive hypothesis that all canonical
K-blocks $\{\bm{K}_{j'k'}:j'\geq k',\,j'<j\}$ have already been
constructed. The two recursion branches in \eqref{eq:K_recursion}
together establish $\{\bm{K}_{jk}:k\leq j\}$ at step $j$, completing
the induction.

\emph{Base case ($j=k=1$):}
$V_1=X\sim\CN(\bm{0},\bm{\Sigma}_X)$ yields $\bm{K}_{11}=\bm{\Sigma}_X$.

\emph{Independence lemma:}
By induction on the topological order, for fixed edge factors $V_k$
is a deterministic linear function of $X$ and a subset of
$\{Z_l\}_{2\leq l\leq k}$. The case $k=1$ holds since $V_1=X$; for
$k\geq 2$, \eqref{eq:node_eq} with $\Pa(k)\subset\{1,\dots,k-1\}$
propagates the claim through the induction hypothesis. Since
$\{Z_l\}_{l=2}^{M}$ are mutually independent and independent of $X$,
$\E[Z_jV_k^{\herm}]=\bm{0}$ whenever $k<j$.

\emph{Cross-block ($j\geq 2$, $k<j$):}
Substituting \eqref{eq:node_eq} into \eqref{eq:K_def} gives
$\bm{K}_{jk}=\sum_{i\in\Pa(j)}\bm{A}_{ji}\E[V_iV_k^{\herm}]
+\E[Z_jV_k^{\herm}]$; the last term vanishes by the lemma. Each
$i\in\Pa(j)$ satisfies $i<j$, so $\E[V_iV_k^{\herm}]=\bm{K}_{ik}$
under the canonical-storage convention \eqref{eq:K_flip}, recovering
the cross-block branch.

\emph{Self-block ($j\geq 2$, $k=j$):}
Expanding both copies of $V_j$ via \eqref{eq:node_eq} and applying
the independence lemma to discard the $Z_j$ cross terms (each parent
$i,i'<j$) leaves
$\bm{K}_{jj}=\sum_{i,i'\in\Pa(j)}\bm{A}_{ji}\E[V_iV_{i'}^{\herm}]
\bm{A}_{ji'}^{\herm}+\bm{\Sigma}_j$; applying \eqref{eq:K_flip} to
each $\E[V_iV_{i'}^{\herm}]$ gives the self-block branch. The double
sum includes parent cross-covariance terms whenever
$|\Pa(j)|\geq 2$, which are generally nonzero in branching/merging
DAGs, as the diamond example \eqref{eq:K44_diamond} makes explicit.

\emph{Termination:}
Each block on the right-hand side of \eqref{eq:K_recursion} involves
indices strictly less than $j$ or pairs already in the canonical
store, so processing nodes in topological order yields all
$\{\bm{K}_{jk}:j\geq k\}$ in a single forward sweep.
\end{proof}

The recursion is executed in topological order $j=1,2,\dots,M$, and at
each step requires only previously computed K-blocks; the entire
collection $\{\bm{K}_{jk}:j\geq k\}$ is obtained in a single forward
pass.
Crucially, every operation in \eqref{eq:K_recursion} is a matrix
product, sum, or Hermitian transpose (all differentiable primitives
supported by complex autograd), so the recursion realizes a
\emph{differentiable forward computation graph} from the controllable
factors $\param$ to all $\bm{K}_{jk}$. Note that the second
double-sum in \eqref{eq:K_recursion} forces a merging node's
self-block to depend on the cross-covariance of its parents:
tracking only the auto-covariances $\bm{K}_{jj}$ is therefore
insufficient in general (see \cref{fig:K_dependency}, and the
diamond DAG of \cref{sec:experiments} for a numerical illustration).
Concretely, for the diamond DAG of \cref{fig:dag_intro} ($M=4$), the
merging self-block reads
\begin{equation}
\begin{aligned}
\bm{K}_{44}={}&\bm{A}_{42}\bm{K}_{22}\bm{A}_{42}^{\herm}
  +\bm{A}_{42}\bm{K}_{23}\bm{A}_{43}^{\herm}\\
  &+\bm{A}_{43}\bm{K}_{32}\bm{A}_{42}^{\herm}
  +\bm{A}_{43}\bm{K}_{33}\bm{A}_{43}^{\herm}+\bm{\Sigma}_4,
\end{aligned}
\label{eq:K44_diamond}
\end{equation}
with $\bm{K}_{23}=\bm{K}_{32}^{\herm}$ via the Hermitian flip; the
parent cross-block $\bm{K}_{32}$ explicitly enters $\bm{K}_{44}$ and
hence the log-det MI.

\begin{figure}[t]
\centering
\includegraphics[width=0.65\linewidth]{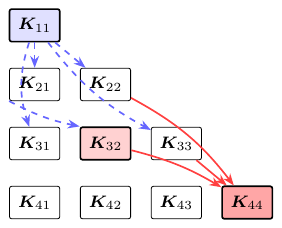}
\caption{K-block dependency for the diamond DAG
(\cref{fig:dag_intro}); only canonical blocks ($j\!\geq\!k$) are stored.
Blue dashed arrows: forward propagation from
$\bm{K}_{11}\!=\!\bm{\Sigma}_X$.
Red arrows: dependencies for the merging block $\bm{K}_{44}$, which
requires the parent cross-covariance $\bm{K}_{32}$ in addition to the
self-blocks $\bm{K}_{22},\bm{K}_{33}$.}
\label{fig:K_dependency}
\end{figure}

\begin{remark}[Computational complexity]
\label{rem:complexity}
Let $d_{\max}\!=\!\max_j d_j$ and $|\mathcal{E}|$ be the number of
DAG edges.
The cross-block and self-block updates of the K-recursion cost
$O(\sum_{j=2}^{M}(j\!-\!1)|\Pa(j)|d_{\max}^{3})
\subseteq O(M|\mathcal{E}|d_{\max}^{3})$
and $O(\sum_{j=2}^{M}|\Pa(j)|^{2}d_{\max}^{3})$, respectively.
For sparse DAGs with $|\Pa(j)|\!=\!O(1)$, the overall cost is
$O(M^{2}d_{\max}^{3})$, dominated by the cross-block updates,
and storing the canonical blocks requires $O(M^{2}d_{\max}^{2})$
memory.
Reverse-mode AD through this graph is of the same asymptotic order
but requires retaining the forward intermediates for the backward
pass.
\end{remark}

\subsection{Mutual Information from the K-Recursion}
\label{sec:output_cov}

With all node-pair covariances furnished by the K-recursion of
\cref{prop:K_recursion}, the mutual information itself admits a
closed-form expression as a smooth function of the controllable
factors $\param$. The following proposition formalizes this bridge
from the DAG model to a scalar differentiable objective, on which
the gradient-based optimization of \cref{sec:mi_grad,sec:pga} will
act.

\begin{proposition}[Mutual information from the K-recursion]
\label{prop:MI_from_K}
Consider the linear Gaussian DAG of \cref{sec:model}. Under the
regularity assumption
$\bm{\Sigma}_Y(\param), \bm{\Sigma}_{Y|X}(\param) \succ \bm{0}$,
the mutual information admits the closed form
\begin{equation}
\MI(X;Y)
=\log\det\bm{\Sigma}_Y(\param)-\log\det\bm{\Sigma}_{Y|X}(\param),
\label{eq:mi_logdet_param}
\end{equation}
where
\begin{equation}
\bm{\Sigma}_Y=\bm{K}_{MM},\qquad
\bm{\Sigma}_{Y|X}
=\bm{K}_{MM}-\bm{K}_{M1}\bm{K}_{11}^{-1}\bm{K}_{M1}^{\herm}
\label{eq:Sigma_Y_given_X}
\end{equation}
are obtained from the K-blocks of \cref{prop:K_recursion}. The map
$\param\mapsto\MI(X;Y)$ is smooth, as a real-valued function of the
real and imaginary parts of the controllable factors, on the open set
where the regularity assumption holds.
\end{proposition}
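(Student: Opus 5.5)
The plan has three steps: show that $(X,Y)$ is jointly circular complex Gaussian, identify the covariance blocks through \cref{prop:K_recursion}, and then argue smoothness.

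\emph{Joint Gaussianity.} The independence lemma in the proof of \cref{prop:K_recursion} shows, by induction along the topological order, that every node $V_k$ is a deterministic linear function of $X$ and of the mutually independent noises $\{Z_l\}$. Stacking these gives $(X,Y)=\bm{T}\,(X,Z_2,\dots,Z_M)$ for a fixed matrix $\bm{T}$ that depends on $\param$. The vector $(X,Z_2,\dots,Z_M)$ is jointly circular complex Gaussian, because its components are independent and each is circular. A linear image of such a vector is again zero-mean circular complex Gaussian: the pseudo-covariance $\E[\bm{T}WW^{\trans}\bm{T}^{\trans}]$ vanishes. This holds even though some $\bm{\Sigma}_j$ may be only PSD. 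Hence $(X,Y)$ is jointly circular Gaussian.

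\emph{Identifying the covariances.} By \eqref{eq:K_def} we have $\bm{\Sigma}_X=\bm{K}_{11}$, $\bm{\Sigma}_Y=\bm{K}_{MM}$ and $\bm{\Sigma}_{YX}=\E[YX^{\herm}]=\bm{K}_{M1}$, so that $\bm{\Sigma}_{XY}=\bm{K}_{M1}^{\herm}$. All of these are canonical blocks produced by \cref{prop:K_recursion}. We have $\bm{K}_{11}=\bm{\Sigma}_X\succ\bm{0}$ by the model assumption, and the regularity hypothesis gives $\bm{\Sigma}_Y,\bm{\Sigma}_{Y|X}\succ\bm{0}$. The hypotheses of \eqref{eq:logdet_mi} therefore hold. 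Substituting into \eqref{eq:schur} yields \eqref{eq:Sigma_Y_given_X}, and \eqref{eq:logdet_mi} then yields \eqref{eq:mi_logdet_param}.

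One could verify \eqref{eq:logdet_mi} directly through $I=h(Y)-h(Y|X)$. Here $h(Y)=\log\det(\pi e\bm{\Sigma}_Y)$. For $h(Y|X)$, the conditional law of $Y$ given $X$ is Gaussian with the Schur-complement covariance, which does not depend on $X$. The paper already states \eqref{eq:logdet_mi} as given, so I would simply invoke it.

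\emph{Smoothness.} The argument has three links.
\begin{itemize}
\item Each $\bm{A}_{ji}$ is a product of factors \eqref{eq:edge_factor}, so its real and imaginary parts are polynomials in the real and imaginary parts of $\param$.
\item The recursion \eqref{eq:K_recursion} uses only sums, products and Hermitian transposes. Hence every $\bm{K}_{jk}$, and in particular $\bm{K}_{MM}$ and $\bm{K}_{M1}$, is polynomial in these real coordinates.
\item $\bm{K}_{11}^{-1}$ is constant, so $\bm{\Sigma}_{Y|X}$ is polynomial as well.
\end{itemize}
The map $\bm{S}\mapsto\log\det\bm{S}$ is real-analytic on the open cone of HPD matrices. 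For such $\bm{S}$ the determinant is real and positive, and locally $\log\det\bm{S}=\sum\log\lambda_i$; alternatively, $\det$ is a polynomial that is positive there. The set where both $\bm{\Sigma}_Y$ and $\bm{\Sigma}_{Y|X}$ are HPD is open, because it is the preimage of an open cone under a continuous map. The composition is therefore smooth on that set.

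The only delicate point is the first step: circularity and joint Gaussianity must be confirmed despite PSD-only noise, and this is what licenses the log-det formula. Everything else is bookkeeping with the K-blocks.
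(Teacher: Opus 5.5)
Your proposal is correct and follows essentially the same route as the paper: identify $\bm{\Sigma}_X=\bm{K}_{11}$, $\bm{\Sigma}_Y=\bm{K}_{MM}$ and $\bm{\Sigma}_{YX}=\bm{K}_{M1}$, substitute them into the Schur-complement log-det formula, and get smoothness by composing the polynomial K-blocks with $\log\det$ on the open HPD cone. Your explicit check that $(X,Y)$ is jointly circular Gaussian (via the independence lemma and the vanishing pseudo-covariance of a linear image, which holds even with PSD-only noise) is a step the paper uses only implicitly when it invokes \eqref{eq:logdet_mi}, so it is a welcome addition rather than a different route.
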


\begin{proof}
With the K-blocks computed, the quantities entering the log-det MI of
\cref{sec:cgauss} are read off directly: $\bm{\Sigma}_Y=\bm{K}_{MM}$
and $\bm{\Sigma}_{YX}=\bm{K}_{M1}$, with the index reversal handled by
\eqref{eq:K_flip}. The conditional output covariance follows from the
Schur-complement form of \cref{sec:cgauss} applied with these K-blocks.
The K-blocks are smooth in $\param$ by \cref{prop:K_recursion} (matrix
products and sums of controllable and constant factors), and
$\log\det$ is smooth on the open positive-definite cone; hence the
composition is smooth wherever
$\bm{\Sigma}_Y(\param),\bm{\Sigma}_{Y|X}(\param)\succ\bm{0}$.
\end{proof}

\begin{remark}[Effective-channel representation]
\label{rem:effective_channel}
For MI evaluation alone, the Gaussian-DAG can also be collapsed to
an equivalent linear Gaussian channel
$Y=\bm{G}_M X+R_M$, where $R_M$ is independent of $X$ and the
effective channel matrices satisfy $\bm{G}_1=\bm{I}_{d_X}$ and
$\bm{G}_j=\sum_{i\in\Pa(j)}\bm{A}_{ji}\bm{G}_i$ for $j\geq 2$. The
effective-noise covariance blocks
$\bm{C}_{jk}\triangleq\E[R_j R_k^{\herm}]$ obey the same recursion
as \cref{prop:K_recursion} but with the input covariance set to
zero, i.e.\ $\bm{C}_{11}=\bm{0}$, yielding
$\MI(X;Y)=\log\det(\bm{G}_M\bm{\Sigma}_X\bm{G}_M^{\herm}+\bm{C}_{MM})
-\log\det\bm{C}_{MM}$.
This is equivalent to the Schur-complement form
\eqref{eq:Sigma_Y_given_X}, since
$\bm{K}_{jk}=\bm{G}_j\bm{\Sigma}_X\bm{G}_k^{\herm}+\bm{C}_{jk}$.
Either representation is a valid \emph{differentiable computation
graph}, and reverse-mode AD on it (\cref{sec:autograd_grad}) yields
the exact MI gradient with respect to the controllable factors.
Although the effective-channel representation
$(\bm{G}_M,\bm{C}_{MM})$ is sufficient for a single source-sink MI,
the full K-recursion is the more general primitive: it exposes
intermediate covariances, parent cross-covariances at merging nodes,
and covariance quantities needed for internal constraints, multiple
sinks, or Bussgang-type linearizations of nonlinear elements in the
network (whose equivalent linear gains depend on intermediate
node-pair covariances).
\end{remark}

\section{Gradient Computation and Projected Ascent}
\label{sec:mi_grad}

With the closed-form mutual information of \cref{prop:MI_from_K} at
hand, we now compute its exact Wirtinger gradient by complex
reverse-mode AD and maximize $\MI(X;Y)$ under the network-wide
constraints by projected gradient ascent (PGA). The construction is
AD-framework agnostic; for concreteness, we use
PyTorch~\cite{paszke2019pytorch} as the AD engine throughout this
paper and flag the framework-specific conventions (e.g., the sign of
the returned gradient and the factor of two for complex leaves)
where they affect implementation.

\subsection{Reverse-Mode AD for the Wirtinger Gradient}
\label{sec:autograd_grad}
By \cref{prop:MI_from_K}, the map $\param \mapsto \MI(X;Y)$ factors
through the following five-stage composition, each stage consisting of
smooth operations on complex matrices:
\begin{enumerate}[label=(\arabic*),leftmargin=2.2em,nosep]
\item the controllable parameter set
$\param=\{\bm{A}_{ji}^{(\ell)}\}_{(j,i,\ell)\in\mathcal{C}}$;
\item edge matrices $\bm{A}_{ji}=\prod_{\ell}\bm{A}_{ji}^{(\ell)}$
(matrix products of controllable and constant factors);
\item the canonical K-blocks $\{\bm{K}_{jk}\}_{j\geq k}$ via
\eqref{eq:K_recursion} (sums of matrix products);
\item the output covariances
$\bm{\Sigma}_Y\!=\!\bm{K}_{MM}$, $\bm{\Sigma}_{YX}\!=\!\bm{K}_{M1}$,
and $\bm{\Sigma}_{Y|X}$ via the Schur complement
\eqref{eq:Sigma_Y_given_X};
\item the scalar real-valued loss $\MI(X;Y)$ via $\log\det(\cdot)$.
\end{enumerate}
The forward graph is built entirely from primitives standard in modern
automatic-differentiation engines (matrix product, sum, Hermitian
transpose, matrix inverse, log-determinant). Because the underlying
network is a DAG, the graph is itself acyclic, and reverse-mode AD
traverses it in a single well-defined backward sweep without unrolling
or fixed-point iteration. The DAG structure thus plays a dual role: it
makes the K-recursion well-defined in topological order
(\cref{prop:K_recursion}) and the AD backward sweep well-defined in
reverse topological order. Defining the loss
$\mathcal{L}(\param)\triangleq -\MI(X;Y)$, complex reverse-mode AD
returns the exact conjugate-side Wirtinger gradient
\begin{equation}
\nabla_{\bm{A}_{ji}^{(\ell)*}}\MI
\triangleq
\left(\frac{\partial\MI}{\partial\bm{A}_{ji}^{(\ell)*}}\right)^{\trans},
\qquad (j,i,\ell)\in\mathcal{C},
\label{eq:wirtinger_grad_factor}
\end{equation}
at every controllable factor in a single backward sweep through stages
(5)$\to$(1). In implementation we differentiate the loss
$\mathcal{L}=-\MI$; the ascent direction for $\MI$ is obtained by
changing the sign of the returned gradient, with the PyTorch
convention-dependent factor of two absorbed into the step size as
noted in \cref{sec:wirtinger}. The applicability of the Wirtinger
chain rule to arbitrary complex matrix parameters was formalized
in~\cite{schreier_scharf2010}; the reverse-mode AD employed here can
be viewed as its automatic, topology-agnostic execution. By the
cheap-gradient principle~\cite{baydin2018ad}, this backward sweep
costs at most a small constant multiple of the forward sweep, and
\emph{all} controllable-factor gradients are produced simultaneously.

In finite precision, $\bm{\Sigma}_Y$ and $\bm{\Sigma}_{Y|X}$ are
symmetrized as $\tfrac{1}{2}(\bm{\Sigma}+\bm{\Sigma}^{\herm})$ and
evaluated via $\log\det(\bm{\Sigma}+\epsilon\bm{I})=2\sum_i\log L_{ii}$
(Cholesky, small $\epsilon$) to keep the log-det well-conditioned and
smooth in $\param$.

\subsection{Projected Gradient Ascent}
\label{sec:pga}
The exact gradient $\nabla_{\param^{*}}\MI$ produced by reverse-mode
AD is precisely the input required by PGA:
\begin{equation}
\param^{(t+1)}
=\mathcal{P}_{\mathcal{S}}\bigl(\param^{(t)}
+\alpha_t\,\nabla_{\param^{*}}\MI(\param^{(t)})\bigr),
\label{eq:pga_update}
\end{equation}
where $\mathcal{P}_{\mathcal{S}}$ denotes the Euclidean projection
onto the feasible set $\mathcal{S}$,
\begin{equation}
\mathcal{P}_{\mathcal{S}}(\bm{\xi})
:=\arg\min_{\bm{\zeta}\in\mathcal{S}}\|\bm{\zeta}-\bm{\xi}\|_F^2,
\label{eq:projection}
\end{equation}
with $\mathcal{S}$ encoding the design constraints of
\cref{sec:param_feasible} and $\{\alpha_t\}$ a step-size schedule.
The update is interpreted factor-wise: each controllable factor
$\bm{A}_{ji}^{(\ell)}$ is independently ascended along its own
gradient before joint projection onto $\mathcal{S}$. Each iteration
costs one forward--backward sweep through the K-recursion for
$\nabla_{\param^{*}}\MI$, plus one application of
$\mathcal{P}_{\mathcal{S}}$ which is closed-form for the projection
types discussed below.
Because the objective \eqref{eq:mi_logdet_param} is generally
non-concave in $\param$, PGA provides only convergence to a KKT
stationary point under standard step-size schedules; multiple random
initializations are typically used in practice.

For the canonical wireless constraints introduced in
\cref{sec:param_feasible}, the projection $\mathcal{P}_{\mathcal{S}}$
admits the following closed forms:
\begin{itemize}[leftmargin=1.5em,nosep]
\item \emph{Total Frobenius budget}
$\sum_{(j,i,\ell)\in\mathcal{C}}\|\bm{A}_{ji}^{(\ell)}\|_F^2\le P$:
a \emph{single common} scale factor $s$ is applied to every
controllable factor,
\[
\bm{A}_{ji}^{(\ell)}\leftarrow s\,\bm{A}_{ji}^{(\ell)},\ \
s=\min\!\left\{1,\sqrt{\tfrac{P}{\sum_{(j,i,\ell)\in\mathcal{C}}\|\bm{A}_{ji}^{(\ell)}\|_F^2}}\right\}.
\]
\item \emph{Per-factor Frobenius budget}
$\|\bm{A}_{ji}^{(\ell)}\|_F^2\le P_{ji}^{(\ell)}$: each factor is
rescaled independently onto its own Frobenius ball.
\item \emph{Diagonal/scalar/unit-modulus controls}: the structural
form is enforced by directly parameterizing the underlying
scalar/vector entries (\cref{sec:param_feasible}); any additional
norm or modulus constraint is projected in this lower-dimensional
parameter space (e.g., entry-wise renormalization
$\theta_m\!\leftarrow\!\theta_m/|\theta_m|$ for unit modulus).
\end{itemize}
More elaborate projections (orthogonal/Stiefel manifolds, low-rank
truncation via SVD) are equally compatible with the framework but are
not used in the experiments of \cref{sec:experiments}.

\subsection{Input Shaping and Effective-Channel Capacity}
\label{sec:input_shaping}

We now extend the framework to input-covariance shaping. A simple
virtual-edge augmentation absorbs $\bm{\Sigma}_X$ optimization into
the same K-recursion + PGA pipeline (\cref{rem:input_cov}), and the
resulting maximum MI coincides with the capacity of the effective
vector Gaussian channel of \cref{rem:effective_channel}
(\cref{prop:capacity}).

\begin{remark}[Input covariance optimization via a virtual source]
\label{rem:input_cov}
The framework also accommodates optimization of the input covariance
itself (\cref{fig:topology}(d)).
Introduce a virtual source node $S\sim\CN(\bm{0},\bm{I})$ and a
controllable input factor $\bm{Q}$ such that $X=\bm{Q}S$.
Then $\bm{\Sigma}_X=\bm{Q}\bm{Q}^{\herm}$, and a total power
constraint $\tr(\bm{\Sigma}_X)\le P$ becomes $\|\bm{Q}\|_F^2\le P$,
which falls within the Frobenius-ball projections above.
The augmented DAG is handled by the same K-recursion
\eqref{eq:K_recursion} with
$\bm{K}_{SS}=\bm{I}$ and $\bm{K}_{XX}=\bm{Q}\bm{Q}^{\herm}$, so
input-covariance design reduces to the optimization of one
additional controllable edge factor.
\end{remark}

The virtual-edge reformulation of \cref{rem:input_cov} carries a
sharp information-theoretic consequence. By
\cref{rem:effective_channel}, any linear Gaussian DAG with fixed
non-source edge factors collapses to a single vector Gaussian channel,
whose capacity under a Gaussian-input power budget is attained by
classical water-filling. The following proposition records that the
maximum end-to-end MI achievable by controlling $\bm{Q}$ alone is
exactly this capacity, irrespective of the DAG topology.

\begin{proposition}[Effective-channel capacity]
\label{prop:capacity}
Consider a linear Gaussian DAG whose non-source edge factors are
fixed, with effective channel $Y=\bm{G}_M X+R_M$ and
$R_M\sim\CN(\bm{0},\bm{C}_{MM})$ (\cref{rem:effective_channel}), and
assume $\bm{C}_{MM}\succ\bm{0}$. Let $\{\lambda_i\}_{i=1}^{d_X}$ be
the (nonnegative) eigenvalues of
$\bm{G}_M^{\herm}\bm{C}_{MM}^{-1}\bm{G}_M$, and let $\mu>0$ be the
water level satisfying
$\sum_{i:\lambda_i>0}[\mu-1/\lambda_i]_{+}=P$, where the sum is
restricted to the positive eigenmodes (zero eigenvalues contribute
zero capacity). Under the virtual-edge construction of
\cref{rem:input_cov} with $\|\bm{Q}\|_F^2\le P$,
\begin{equation}
\sup_{\|\bm{Q}\|_F^2\le P}\MI(X;Y)
=\sum_{i:\lambda_i>0}\bigl[\log(\mu\lambda_i)\bigr]_{+},
\label{eq:capacity}
\end{equation}
which equals the capacity of the effective channel under the
Gaussian-input power constraint, attained by the water-filling
allocation $p_i^{\star}=[\mu-1/\lambda_i]_{+}$ on the positive
eigenmodes of $\bm{G}_M^{\herm}\bm{C}_{MM}^{-1}\bm{G}_M$.
\end{proposition}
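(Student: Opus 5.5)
We reduce the supremum over $\bm{Q}$ to the classical Gaussian-input capacity problem for the effective channel of \cref{rem:effective_channel}, and then invoke water-filling.

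\emph{Step 1: MI depends on $\bm{Q}$ only through $\bm{Q}\bm{Q}^{\herm}$.} Since the non-source edge factors are fixed, the collapse of \cref{rem:effective_channel} applies to the augmented DAG with the virtual source $S$ feeding $X=\bm{Q}S$. This is a noiseless edge, so $X$ is a deterministic function of $S$. The downstream effective channel $(\bm{G}_M,\bm{C}_{MM})$ from $X$ to $Y$ is computed from the non-source edges and noises only, so it does not depend on $\bm{Q}$. With $\bm{\Sigma}_X=\bm{Q}\bm{Q}^{\herm}$,
\[
\MI(X;Y)=\log\det\bigl(\bm{C}_{MM}+\bm{G}_M\bm{Q}\bm{Q}^{\herm}\bm{G}_M^{\herm}\bigr)-\log\det\bm{C}_{MM}.
\]
This holds even when $\bm{Q}$ is rank-deficient and $\bm{\Sigma}_X$ is only PSD: in that case we read $\MI(X;Y)$ as the log-det expression above (equivalently $\MI(S;Y)$). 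This needs only $\bm{C}_{MM}\succ\bm{0}$, which is assumed, and it also explains why the statement uses $\sup$. As $\bm{Q}$ ranges over $\|\bm{Q}\|_F^2\le P$, the matrix $\bm{\Sigma}_X=\bm{Q}\bm{Q}^{\herm}$ ranges over exactly all PSD matrices with $\tr\bm{\Sigma}_X\le P$, taking $\bm{Q}$ to be a square root of $\bm{\Sigma}_X$. So the problem becomes $\max_{\bm{\Sigma}_X\succeq\bm{0},\,\tr\bm{\Sigma}_X\le P}\log\det(\bm{I}+\bm{C}_{MM}^{-1/2}\bm{G}_M\bm{\Sigma}_X\bm{G}_M^{\herm}\bm{C}_{MM}^{-1/2})$.

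\emph{Step 2: whiten and diagonalize.} Set $\tilde{\bm{G}}=\bm{C}_{MM}^{-1/2}\bm{G}_M$. Sylvester's determinant identity gives
\[
\log\det\bigl(\bm{I}+\bm{\Sigma}_X^{1/2}\tilde{\bm{G}}^{\herm}\tilde{\bm{G}}\bm{\Sigma}_X^{1/2}\bigr).
\]
Here $\tilde{\bm{G}}^{\herm}\tilde{\bm{G}}=\bm{G}_M^{\herm}\bm{C}_{MM}^{-1}\bm{G}_M=\bm{U}\bm{\Lambda}\bm{U}^{\herm}$ with eigenvalues $\lambda_i\ge0$. Write $\tilde{\bm{\Sigma}}=\bm{U}^{\herm}\bm{\Sigma}_X\bm{U}$; this preserves the trace and the PSD property, and the objective becomes $\log\det(\bm{I}+\bm{\Lambda}^{1/2}\tilde{\bm{\Sigma}}\bm{\Lambda}^{1/2})$. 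Hadamard's inequality bounds this determinant by the product of its diagonal entries $\prod_i(1+\lambda_i p_i)$, where $p_i=\tilde{\Sigma}_{ii}\ge0$ and $\sum_i p_i\le P$. Equality holds for diagonal $\tilde{\bm{\Sigma}}$, so the problem reduces to
\[
\max_{p_i\ge 0,\ \sum_i p_i\le P}\ \sum_i\log(1+\lambda_i p_i).
\]
Modes with $\lambda_i=0$ contribute nothing, so any power placed on them is wasted and we set those $p_i=0$.

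\emph{Step 3: water-filling via KKT.} The reduced problem is concave with linear constraints, so the KKT conditions are necessary and sufficient. They give $p_i^\star=[\mu-1/\lambda_i]_+$ on the positive modes. The total power constraint is active because the objective is strictly increasing in each $p_i$ with $\lambda_i>0$. The function $\mu\mapsto\sum_{i:\lambda_i>0}[\mu-1/\lambda_i]_+$ is continuous and strictly increasing once it is positive, and it runs from $0$ to $\infty$, so the water level $\mu$ exists and is unique.

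If no $\lambda_i$ is positive, both sides of \eqref{eq:capacity} are zero (with $\mu$ chosen arbitrarily), so that degenerate case is immediate. Otherwise, substituting $p_i^\star$ gives $\log(1+\lambda_i p_i^\star)=\log(\mu\lambda_i)$ when $\mu\lambda_i>1$ and $0$ otherwise, which is $[\log(\mu\lambda_i)]_+$ and yields \eqref{eq:capacity}. The maximizer is attained by $\bm{Q}^\star=\bm{U}\diag(\sqrt{p_i^\star})$. Finally, this value equals the effective channel's capacity under the power constraint because Gaussian inputs are capacity-achieving for additive Gaussian-noise vector channels (Telatar).

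The \textbf{main obstacle} is Step 1: we must justify that the effective-channel collapse is unaffected by the virtual edge and remains valid for singular $\bm{\Sigma}_X$. The regularity assumption of \cref{prop:MI_from_K} requires $\bm{\Sigma}_X\succ\bm{0}$, whereas the water-filling optimum may be rank-deficient. We handle this by working with the $\bm{C}_{MM}$-based expression throughout, which needs only $\bm{C}_{MM}\succ\bm{0}$. Step 2, the Hadamard reduction, is routine.
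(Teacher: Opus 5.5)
Your proposal is correct and follows essentially the same route as the paper's proof. Both map $\|\bm{Q}\|_F^2\le P$ to $\tr\bm{\Sigma}_X\le P$, write the MI as the whitened log-det of the effective channel, rotate into the eigenbasis of $\bm{G}_M^{\herm}\bm{C}_{MM}^{-1}\bm{G}_M$ and apply Hadamard's inequality to reduce to $\sum_i\log(1+\lambda_i p_i)$, and then water-fill; your extra care fills in points the paper leaves implicit, namely reading the objective as $\MI(S;Y)$ so that rank-deficient $\bm{Q}\bm{Q}^{\herm}$ is covered using only $\bm{C}_{MM}\succ\bm{0}$, proving existence of the water level, and exhibiting $\bm{Q}^\star=\bm{U}\diag(\sqrt{p_i^\star})$ (both arguments tacitly take $\bm{Q}$ to have at least $d_X$ columns, so that every feasible $\bm{\Sigma}_X$ is reachable).
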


\begin{proof}
$\|\bm{Q}\|_F^2=\tr(\bm{Q}\bm{Q}^{\herm})=\tr(\bm{\Sigma}_X)$, so the
Frobenius constraint maps to $\tr(\bm{\Sigma}_X)\le P$. By
\cref{rem:effective_channel},
\[
\MI(X;Y)=\log\det\bigl(\bm{I}+
\bm{C}_{MM}^{-1/2}\bm{G}_M\bm{\Sigma}_X\bm{G}_M^{\herm}
\bm{C}_{MM}^{-1/2}\bigr).
\]
Diagonalize $\bm{G}_M^{\herm}\bm{C}_{MM}^{-1}\bm{G}_M
=\bm{U}\diag(\lambda_1,\dots,\lambda_{d_X})\bm{U}^{\herm}$ with
$\lambda_i\geq 0$. By the standard MIMO water-filling
theorem~\cite{telatar1999}, or equivalently by applying Hadamard's
inequality after rotating the input covariance into this eigenbasis,
an optimal $\bm{\Sigma}_X$ can be taken diagonal in the eigenbasis,
$\bm{\Sigma}_X=\bm{U}\diag(p_1,\dots,p_{d_X})\bm{U}^{\herm}$, with
$p_i\geq 0$ and $\sum_i p_i\leq P$. The objective then factorizes
as $\MI=\sum_i\log(1+\lambda_i p_i)$. Modes with $\lambda_i=0$
contribute zero for any $p_i$, so the maximization reduces to
allocating the budget among the positive eigenmodes; classical
water-filling yields $p_i^{\star}=[\mu-1/\lambda_i]_{+}$ on those
modes and gives \eqref{eq:capacity}. In the degenerate case
$\lambda_i=0$ for all $i$, the effective channel carries no
information and $\MI(X;Y)=0$ for any feasible $\bm{Q}$, matching the
empty sum in \eqref{eq:capacity}.
\end{proof}

\Cref{prop:capacity} suggests two natural deployments of the
framework. \emph{Joint optimization}: run PGA simultaneously over all
controllable factors including the virtual edge $\bm{Q}$.
\emph{Two-stage decomposition (a pragmatic recipe)}: (Stage 1)
optimize the non-source edge factors by PGA with a fixed input
distribution (e.g., white $\bm{\Sigma}_X=\bm{I}$) to obtain the
effective channel $(\bm{G}_M,\bm{C}_{MM})$, then (Stage 2) solve
$\bm{Q}$ in closed form via \cref{prop:capacity}. This decomposition
is operationally simple and removes $\bm{Q}$ from the numerical
optimization, but is in general \emph{suboptimal} for the joint
problem: the non-source factors that maximize the MI under the
fixed Stage~1 input do not coincide with those that maximize the
effective-channel capacity realized by the Stage~2 output. Reaching
the joint maximum therefore requires either joint PGA on the
augmented parameter set, or alternating optimization between the two
stages; both procedures target the joint maximum but, like PGA in
general (\cref{sec:pga}), provide only stationary-point convergence.

\section{Numerical Experiments}
\label{sec:experiments}

We demonstrate the framework on four representative DAG classes
(\cref{sec:exp_setup,sec:exp_results}) plus a multi-layer
network as a direct illustration of the topology-agnostic claim
(\cref{sec:exp_random_network}). All experiments are optimized by
\emph{the same} PGA loop of \cref{sec:pga} with no topology-specific
code path: only the parent dictionary, the controllable/fixed edge
factorization, and the noise covariances change between settings. The
Wirtinger gradient $\nabla_{\param^{*}}\MI$ in
\eqref{eq:wirtinger_grad_factor} is computed by PyTorch's complex
reverse-mode automatic
differentiation~\cite{paszke2019pytorch}; no manual derivative is
implemented in any case.

\subsection{Topologies and Setup}
\label{sec:exp_setup}

\Cref{fig:topology} shows the four DAGs, with per-panel parameters
listed in \cref{tab:exp_setup}. Following the 1-indexed convention of
\cref{sec:nodes_edges} ($V_1=X$, $V_M=Y$), the four panels read:
\begin{itemize}[leftmargin=1.2em,itemsep=0pt,topsep=2pt]
\item[(a)] \emph{Single-link MIMO} ($M=2$): $V_2=\bm{H}\bm{F}V_1+\bm{Z}_2$;
\item[(b)] \emph{Diamond DAG} ($M=4$):
            $V_2=\bm{A}_{21}V_1+\bm{Z}_2$,
            $V_3=\bm{A}_{31}V_1+\bm{Z}_3$,
            $V_4=\bm{A}_{42}V_2+\bm{A}_{43}V_3+\bm{Z}_4$;
\item[(c)] \emph{Two-hop AF relay} ($M=3$):
            $V_2=\bm{H}_1V_1+\bm{Z}_2$,
            $V_3=\bm{H}_2\bm{R}V_2+\bm{Z}_3$;
\item[(d)] \emph{Virtual edge} ($M=3$):
            $V_2=\bm{Q}V_1+\bm{Z}_2$, $V_3=\bm{H}V_2+\bm{Z}_3$, with a
            virtual root $V_1=\widetilde{X}\sim\CN(\bm{0},\bm{I}_d)$.
\end{itemize}
The input $V_1=X$ is white circular Gaussian ($\bm{\Sigma}_X=\bm{I}_d$
in (a)--(c), $\bm{\Sigma}_{\widetilde{X}}=\bm{I}_d$ in (d)), the
$\bm{Z}_j\sim\CN(\bm{0},\sigma^2\bm{I})$ are independent with the
panel-specific $\sigma$ of \cref{tab:exp_setup}, and the controllable
factors are highlighted in red in \cref{fig:topology}. Panel~(d)
applies a small stabilization noise $\bm{\Sigma}_2=\varepsilon\bm{I}$
on $V_2$ with $\varepsilon=10^{-8}\ll\lambda_{\min}(\bm{Q}\bm{Q}^{\herm})$,
making $\tr(\bm{Q}\bm{Q}^{\herm})\le P$ coincide with $\|\bm{Q}\|_F^2\le P$
up to $O(\varepsilon)$.

Fixed channel matrices and the initial controllable factor are
sampled element-wise with independent unit-variance real and
imaginary parts (i.e., each entry is $\CN(0, 2)$); the fixed merging
matrices $\bm{A}_{42}$ and $\bm{A}_{43}$ in panel~(b) are
additionally rescaled by $0.5$, and the initial controllable factor
by $0.1$ so that the starting MI is small. As a
topology-agnostic reference, we report the uniform baseline obtained
by setting each controllable factor to the scaled identity that
saturates the power budget; the corresponding MI is the dashed
horizontal line in \cref{fig:pga_gallery}.

\begin{table}[t]
\centering
\caption{Per-panel parameters and feasible set. Controllable factors
are updated by PGA; fixed factors are sampled once per problem
instance and held constant across iterations.}
\label{tab:exp_setup}
\renewcommand{\arraystretch}{1.15}
\begin{tabular}{@{}lcccl@{}}
\toprule
Panel & $d$ & $\sigma$ & Feasible set & Controllable \\
\midrule
(a) MIMO        & 3 & 0.5 & $\|\bm{F}\|_F^2 \le 5$ & $\bm{F}$ \\
(b) Diamond     & 2 & 0.3 & $\|\bm{A}_{21}\|_F^2 + \|\bm{A}_{31}\|_F^2 \le 4$
                                                  & $\bm{A}_{21},\bm{A}_{31}$ \\
(c) AF relay    & 3 & 0.4 & $\|\bm{R}\|_F^2 \le 3$ & $\bm{R}$ \\
(d) Virtual edge & 3 & 0.5 & $\|\bm{Q}\|_F^2 \le 5$ & $\bm{Q}$ \\
\bottomrule
\end{tabular}
\end{table}

\begin{figure}[!t]
\centering
\includegraphics[width=\linewidth]{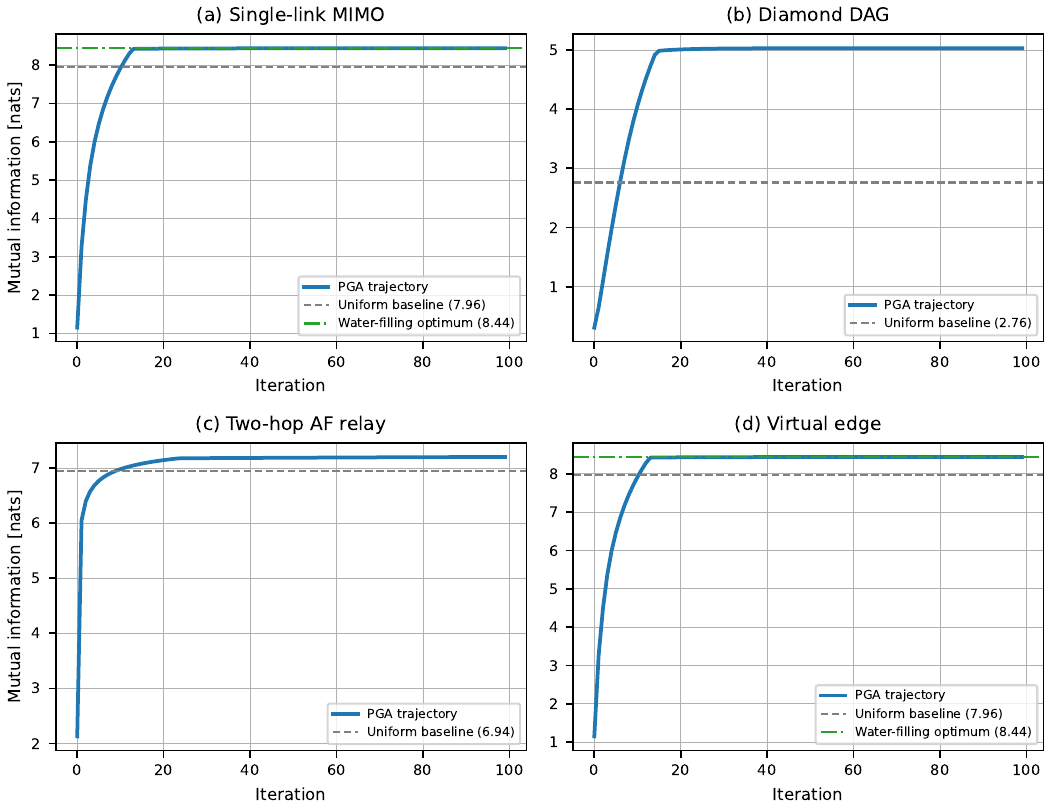}
\caption{PGA trajectories on the analytic information gradient, one
panel per DAG topology of \cref{fig:topology}. Dashed lines: uniform
baseline; dash-dotted lines in (a) and (d): water-filling MI.}
\label{fig:pga_gallery}
\end{figure}

\subsection{Results}
\label{sec:exp_results}

\Cref{fig:pga_gallery} shows the MI trajectory under PGA for one
representative problem instance per panel; all fixed matrices and
the initial controllable factor are sampled from a single random
seed as described above. PGA uses a constant step size
$\alpha=0.05$ and is run for a fixed budget of $T=100$ iterations,
without an early-stopping criterion; the choice of $\alpha$ and
$T$ is uniform across all four panels rather than tuned per panel,
and the panel-specific noise variance $\sigma$ (\cref{tab:exp_setup})
spans a representative range of SNR regimes.
For Panels~(a) and (d) we additionally plot, as a reference, the
classical water-filling MI on $\bm{H}$, which is well-defined
because $X$ is white and the panel reduces to a single-link MIMO
optimization in
$\bm{F}$ (resp.\ $\bm{Q}$).

\emph{(a)} Starting from a small random initialisation, PGA reaches
$8.4414$ nats, exceeding the uniform baseline $7.96$ and matching
the water-filling MI ($8.4419$ nats) to
$4.78\times 10^{-4}$ nats (relative error $\sim 6\times 10^{-5}$);
the gap shrinks to $10^{-7}$--$10^{-8}$ at 200--300 iterations. This
serves as an implicit correctness check that the generic
K-recursion + PGA pipeline reproduces the textbook optimum without
MIMO-specific code. As an explicit gradient-level check in the
single-link case, the reverse-mode AD gradient at the converged
precoder $\bm{F}^{\star}$ agrees with the Palomar--Verd\'u closed
form $\sigma^{-2}\bm{H}^{\herm}\bm{H}\bm{F}\bm{E}$~\cite{palomar2006}
to a relative Frobenius error below $10^{-14}$ in IEEE double
precision.
\emph{(b)} The shared root $V_1=X$ makes the cross-covariance
$\bm{K}_{32}=\bm{A}_{31}\bm{\Sigma}_X\bm{A}_{21}^{\herm}\neq\bm{0}$
and enters the merging-node block $\bm{K}_{44}$
(\cref{fig:K_dependency}); ignoring cross-blocks would yield an
incorrect $\bm{\Sigma}_Y$. Without any topology-specific gradient
derivation, PGA reaches $5.03$ nats, improving by $2.27$ nats over
the uniform branch-precoder baseline $2.76$.
\emph{(c)} The factor $\bm{R}$ enters $\bm{A}_{32}=\bm{H}_2\bm{R}$ and
also amplifies the relay-side noise, producing a nontrivial
signal--noise trade-off; PGA reaches $7.20$ nats, improving by
$0.26$ nats over the uniform relay-gain baseline $6.94$, without a
relay-specific gradient derivation.
\emph{(d)} Adding the virtual root and edge $\bm{Q}$ turns
input-covariance shaping into ordinary edge optimization; PGA again
reaches $8.4414$ nats, matching the water-filling MI of Panel~(a) to
the same accuracy and confirming the equivalence claimed in
\cref{sec:edge_factor}.

\subsection{Arbitrary Topology: A Multi-Layer Network}
\label{sec:exp_random_network}

\begin{figure}[!htbp]
\centering
\includegraphics[width=0.85\linewidth]{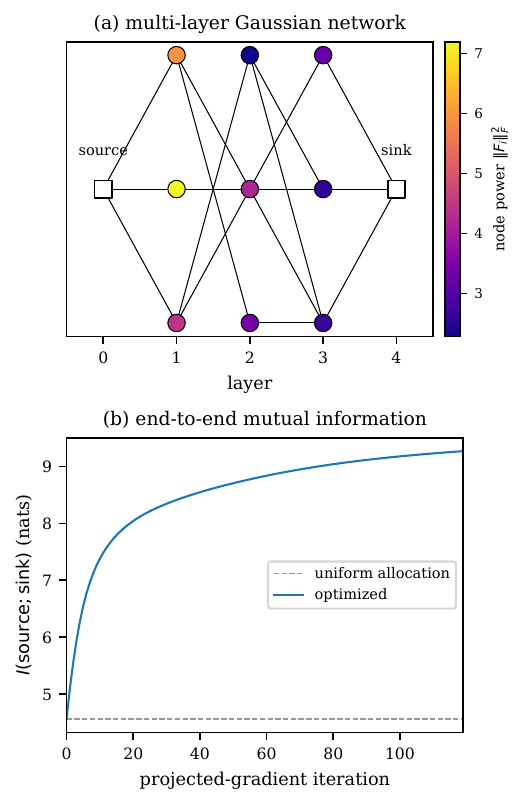}
\caption{Multi-layer Gaussian network ($M=11$
nodes, $L=5$ layers, $17$ edges, $d=4$). \emph{Top}: network topology
with relay nodes shaded by the optimized power
$\|\bm{F}_i^{\star}\|_F^2$ under the shared budget $P=36$.
\emph{Bottom}: end-to-end mutual information $I(X;Y)$ versus PGA
iteration.}
\label{fig:random_network}
\end{figure}

The four panels of \cref{fig:pga_gallery} all admit either a known
closed-form reference~\cite{telatar1999,palomar2006}
(Panels~(a),(d)) or a small, structured topology
amenable to analytical inspection (Panels~(b),(c)). To illustrate
extension beyond these small benchmark DAGs, we now run the same PGA
loop on a multi-layer Gaussian network for which no closed-form
capacity expression is available and the gradient of $I(X;Y)$ cannot
be hand-derived.

\paragraph*{Setup}
The DAG is a layered network with $M=11$ nodes spanning $L=5$ layers
(a source, $3$ relay layers of $W=3$ nodes each, and a sink) and
$17$ edges, with the topology shown in the top panel of
\cref{fig:random_network}. Each node carries a $d=4$-dimensional
complex Gaussian vector. The source $V_1=X$ emits an isotropic signal
$X\sim\CN(\bm{0},\bm{I}_d)$; each of the $9$ relay nodes carries a
controllable processing matrix $\bm{F}_i\in\mathbb{C}^{d\times d}$.
Concretely, each outgoing edge from relay~$i$ is factorized as
$\bm{A}_{ji}=\bm{H}_{ji}\bm{F}_i$, where $\bm{H}_{ji}$ is the fixed
channel and $\bm{F}_i$ is the relay's shared controllable processing
matrix (source-outgoing edges carry only $\bm{H}_{ji}$), realizing
the parameter-sharing pattern introduced in
\cref{sec:param_feasible}.
The framework jointly optimizes $\{\bm{F}_i\}_{i=1}^{9}$ under a
\emph{shared total power budget}
$\sum_{i=1}^{9}\|\bm{F}_i\|_F^2\le P$ with $P=36$, chosen so the
uniform initialization corresponds to identity processing at every
relay ($\|\bm{F}_i\|_F^2=d=4$ per relay). The same PGA loop of
\cref{sec:pga} with step size $\alpha=0.05$ is run for $T=120$
iterations.

\paragraph*{Results}
\Cref{fig:random_network} reports the outcome. The top panel
visualizes the network, with relay nodes shaded by their
optimized power $\|\bm{F}_i^{\star}\|_F^2$, revealing the discovered
network-wide allocation; the bottom panel plots the end-to-end mutual
information versus PGA iteration. From the uniform initialization
$I(X;Y)=4.56$ nats, the framework reaches $9.28$ nats, a
$2.0\times$ increase obtained by jointly tuning all $9$ relay matrices
under the shared budget. The optimized relay powers range from
$2.28$ to $7.19$ against the uniform share of $4.00$, demonstrating
that the framework redistributes the shared budget \emph{non-uniformly}
across relays; the total power
$\sum_i\|\bm{F}_i^{\star}\|_F^2=36$ is held throughout by the projection
step. No per-topology gradient was derived for this network: a single
K-recursion forward pass through the $M$-node graph followed by one
reverse-mode AD sweep produced the Wirtinger gradient at every relay
simultaneously, and the shared-budget projection distributed the power
network-wide.

\section{Conclusion}
\label{sec:conclusion}
We presented a unified differentiable framework with exact MI
evaluation and exact gradient computation for end-to-end mutual
information optimization over linear Gaussian DAGs, targeting
network-wide design under global constraints. The K-recursion
analytically constructs all node-pair covariances
(\cref{prop:K_recursion}) and the mutual information follows in closed
form (\cref{prop:MI_from_K}); the DAG structure simultaneously orders
this forward construction and the reverse-mode AD backward sweep,
delivering the exact Wirtinger gradient at every controllable factor
in a single reverse-mode backward sweep. A key strength is that no closed-form gradient
expression per topology (such as the Palomar--Verd\'u single-link MIMO
formula~\cite{palomar2006}) is required: the same pipeline applies
across linear Gaussian DAG topologies without topology-specific
gradient derivations. Projected gradient ascent is then used to maximize
the mutual information under global constraints such as a total
transmit power budget, with stationary-point convergence guaranteed
under standard step-size schedules. Numerical experiments on four representative DAG classes confirmed
that, through a single topology-agnostic implementation, the
framework recovered the classical water-filling optimum to numerical
precision in the cases where it is available and yielded MI
improvements in non-single-link representative topologies; a further
demonstration on a multi-layer network (11 nodes, 5 layers), for
which no closed-form gradient is available, illustrated that the
same loop extends to a nontrivial multi-layer topology and discovers
a non-uniform network-wide power allocation.
Future work includes stochastic objectives for fading channels
(developing fading-channel analogues of the present DAG-based
gradient framework, in the spirit of~\cite{palomar2006}) and
Bussgang-type extensions to nonlinear elements in the network.

\section*{Acknowledgment}
This work was supported by JST, CRONOS, Japan Grant Number JPMJCS25N5.


\end{document}